%% file: 0_main.tex
\documentclass[journal]{IEEEtran}
\usepackage{comment}
\usepackage[backend=biber,style=ieee,maxbibnames=6]{biblatex}
\usepackage{amsmath,amssymb,amsfonts}
\usepackage{tabularx}
\PassOptionsToPackage{dvipsnames,svgnames,table}{xcolor}
\usepackage{multirow}
\usepackage{graphicx}
\usepackage{textcomp}
\usepackage{xcolor}
\usepackage{array}
\usepackage{booktabs}
\usepackage{subcaption}
\usepackage{graphicx}
\usepackage[colorlinks=true,allcolors=blue]{hyperref}
\usepackage{placeins}        
\usepackage{stfloats}        

\usepackage{algorithm}
\usepackage{algpseudocode}
\usepackage{hyperref}

\usepackage{tikz}
\usetikzlibrary{arrows.meta,positioning,calc}

\usepackage{tabularx}
\usepackage{booktabs}
\usepackage{amsthm}

\theoremstyle{plain}
\newtheorem{proposition}{Proposition}

\theoremstyle{remark}
\newtheorem{remark}{Remark}
\usepackage{amsthm}
\newtheorem{theorem}{Theorem}

\newtheorem{assumption}{Assumption}

\newtheorem{corollary}{Corollary}
\theoremstyle{definition}
\newtheorem{definition}{Definition}

\newcommand{\zc}{z^{c}}
\newcommand{\Gbar}{\bar{\Gamma}}
\newcommand{\Exp}{\mathbb{E}}
\newcommand{\Prob}{\mathbb{P}}
\newcommand{\kexc}{\kappa_{\mathrm{exc}}}

\newcommand{\sgn}{\operatorname{sgn}}
\newcommand{\Gmag}{\mathcal{G}_{\mathrm{mag}}}
\newcommand{\Gsgn}{\mathcal{G}_{\mathrm{sgn}}}

\newcommand{\Var}{\operatorname{Var}}
\newcommand{\Cov}{\operatorname{Cov}}
\newcommand{\tr}{\operatorname{tr}}
\newcommand{\xs}{\hat{x}^{s}}
\newcommand{\xa}{\hat{x}^{a}}
\newcommand{\Afil}{A_{\mathrm{f}}}
\newcommand{\Qtail}{\mathcal{Q}}
\newcommand{\vsg}{\sigma}
\newcommand{\Lyap}{\mathcal{L}}

\newif\ifreviewmode
\reviewmodefalse
\definecolor{revblue}{RGB}{0,70,160}
\ifreviewmode
  \long\def\rev#1{{\color{revblue}#1}}
\else
  \long\def\rev#1{#1}
\fi

\usepackage{enumitem}

\usepackage{capt-of}

\newcolumntype{P}[1]{>{\centering\arraybackslash}p{#1}}
\def\BibTeX{{\rm B\kern-.05em{\sc i\kern-.025em b}\kern-.08em
    T\kern-.1667em\lower.7ex\hbox{E}\kern-.125emX}}

\begin{document}



\title{Distribution-Free Budgeted Stealthy Attack Scheduling for Remote State Estimation}

\author{Qazi Mairaj ud din, Sidra Ghayour Bhatti, and Qadeer Ahmed
\thanks{The authors are with the Center for Automotive Research, The Ohio State University, Columbus, OH, USA (e-mail: mairajuddin.1@osu.edu;
bhatti.39@osu.edu; ahmed.358@osu.edu).}}

\maketitle

\maketitle

\begin{abstract}
\rev{This letter addresses budgeted stealthy false-data-injection (FDI)
scheduling against remote state estimation, where a resource-constrained adversary may corrupt at most a fraction $\Gbar$ of transmissions. Existing event-triggered schedulers invert a Gaussian innovation tail to set the firing threshold and certify stealth by covariance matching; both are exact only under Gaussianity, which real cyber-physical residuals routinely violate. We propose a distribution-free scheduler
pairing the worst-case FDI action with a split-conformal calibrated
trigger, requiring neither the plant matrices nor any distributional model. We establish exact pathwise stealth against every magnitude-measurable detector, for any firing rule and innovation law; a finite-sample distribution-free bound on the mean firing rate, with almost-sure budget attainment under stationarity and ergodicity; and a steady-state degradation identity linear in a single scalar energy capture $\Psi$, maximized by the same order statistic that delivers the
budget guarantee. A conditional sign-symmetry condition delimits when the certificate extends to sign-sensitive detectors, the residual exposure being governed by a fourth cumulant that vanishes under Gaussian noise. Monte-Carlo studies and a
heavy-duty-truck CAN record confirm the bounds and quantify what the
Gaussian assumption costs outside its regime.}
\end{abstract}

\begin{IEEEkeywords}
Cyber-physical systems, remote state estimation, stealthy attacks, false
data injection, attack detection.
\end{IEEEkeywords}

\input{1_lcss_rev4_b}
\input{1_experiment_v4b}














\renewcommand{\bibfont}{\footnotesize}

\printbibliography

\end{document}

%% file: 1_lcss_rev4_b.tex
\section{Introduction}
\IEEEPARstart{S}{tealthy} false data injection (FDI) is the most
consequential class of deception attack on cyber-physical systems
\cite{zhou2024cybersecurity}: it degrades estimation performance while evading the residual-based detectors guarding the sensor-to-estimator link. A practical adversary is also resource constrained, its bandwidth and energy limits expressed as a budget $\Gbar\in(0,1)$ on the fraction of time steps it may corrupt. This motivates \emph{event-based} scheduling, in which the adversary acts only after an informative event. The principle originates in benign sensor scheduling \cite{han2015stochastic}, was carried to energy-constrained DoS scheduling \cite{zhang2015optimal}, and then to stealthy FDI by Guo \emph{et al.} \cite{Guo2023}, who fire on the real-time residual and maximize the trace of the remote error covariance, and Wang \emph{et al.} \cite{Wang2025}, who extend it to multi-sensor systems.

Two parametric ingredients recur throughout. The trigger threshold is
set by inverting the Gaussian tail of the whitened innovation so
that the budget is met, and stealth is certified by a
\emph{covariance-matching} constraint requiring the corrupted innovation
to retain its nominal covariance $S$. Neither guarantee survives
departures from Gaussianity: off-Gaussian, the tail inversion misses the
budget, and a matched covariance no longer implies a matched law. Real
CPS residuals are routinely non-Gaussian --- nonlinearities inject
higher moments, saturation truncates a tail, mode switches produce
transients, and outliers, quantization and packet drops generate
occasional large innovations \cite{zhao2021event}. Vehicle CAN streams
are a concrete instance. 


This letter studies budgeted stealthy scheduling with the distributional
assumption removed. The adversary chooses the signal it
injects (Sec.~\ref{sec:signal}) and the instants at which to inject it (Sec.~\ref{sec:schedule}), giving the following contributions.


\begin{enumerate}
\item 
\rev{We certify the sign-flip FDI attack as exactly and pathwise stealthy against the class $\Gmag$ of magnitude-measurable detectors, under no distributional, independence or stationarity assumption and with no knowledge of the plant, filter or detector (Thm.~\ref{thm:magnitude}). The covariance certificate of \cite{Guo2023,Wang2025} is then shown necessary but not sufficient off Gaussianity: demanding the received law rather than its second moment collapses their feasible family to the sign-flip, which is
independently that family's damage maximizer  (Prop.~\ref{prop:secondmoment}).}
\item 
\rev{We propose a trigger replacing the Gaussian tail inverse with an order statistic of the nominal magnitude record, and prove a \emph{finite-sample} rate bound under exchangeability together with almost-sure budget attainment under stationarity and ergodicity alone Thm.~\ref{thm:budget}.}
\item 
\rev{We prove the adversary's entire influence collapses into one estimable scalar $\Psi$, the fraction of innovation energy the schedule
captures, separating firing rate from selection quality (Thm.~\ref{thm:degradation}), and that $\Psi$ is maximized causally by a
memoryless threshold on $|z_{k}|$ whose cut point is \emph{the same
order statistic} that delivers the budget guarantee, so one statistic
carries both (Cor.~\ref{cor:optimal}); \cite{Guo2023,Wang2025} are recovered as the Gaussian case.}
\item 
\rev{We identify exactly when the certificate extends to the sign-sensitive class $\Gsgn$: when the innovation is conditionally sign symmetric, which can fail only off Gaussianity (Prop.~\ref{prop:boundary}) --- and show that failure is coupled to the damage, both living in the tail the optimal trigger fires on. Experiments on Gaussian and heavy-tailed regimes and a heavy-duty-truck CAN record confirm each bound and show the Gaussian assumption mispredicting damage by up to a factor of two at small budgets.}
\end{enumerate}

\emph{Notation:} $\kappa_r(X)$ is the $r$-th cumulant of $X$ and
$\kappa_{\rm exc}(X):=\kappa_4(X)/\operatorname{Var}(X)^2$ its excess kurtosis, zero for a
Gaussian; $\operatorname{cum}(\cdot)$ is the joint cumulant; $\mathbf 1\{\cdot\}$ the
indicator; $\hat F_N$ the empirical CDF from $N$ samples; and, for $\rho(A)<1$,
$\mathcal L(A,W)$ the unique solution of $X=AXA^\top+W$.

\section{Problem Setup}
\label{sec:setup}

\subsection{Plant and Remote Estimation Architecture}
Consider the discrete-time linear plant
\begin{equation}
x_{k+1}=Ax_{k}+w_{k},\qquad y_{k}=Cx_{k}+v_{k},
\label{eq:plant}
\end{equation}
with $x_{k}\in\mathbb{R}^{n}$, $y_{k}\in\mathbb{R}^{m}$, and mutually
independent zero-mean i.i.d.\ noises $w_{k},v_{k}$ of covariance
$Q\succ0$, $R\succ0$. \emph{Neither is assumed Gaussian.} With $(A,C)$
detectable and $(A,Q^{1/2})$ stabilisable, the algebraic Riccati
equation has a unique stabilising solution $\bar P$; set
$S:=C\bar PC^{\top}+R$, $K:=\bar PC^{\top}S^{-1}$ and $\Afil:=A(I-KC)$,
which is Schur. A smart sensor collocated with \eqref{eq:plant} runs the
steady-state Kalman filter, producing $\xs_{k}$ and the innovation
\begin{equation}
z_{k}=y_{k}-C\xs_{k|k-1},\qquad \Exp[z_{k}z_{k}^{\top}]=S,
\label{eq:innov}
\end{equation}
whose prior error $e_{k}:=x_{k}-\xs_{k|k-1}$ obeys
\begin{equation}
e_{k+1}=\Afil e_{k}+w_{k}-AKv_{k},\qquad z_{k}=Ce_{k}+v_{k}.
\label{eq:errdyn}
\end{equation}
The sensor transmits $z_{k}$ over a network and the remote estimator
applies the received innovation $\zc_{k}$ directly,
\begin{equation}
\xa_{k}=\xa_{k|k-1}+K\zc_{k},\qquad \xa_{k+1|k}=A\xa_{k}.
\label{eq:remote}
\end{equation}

\begin{assumption}
\label{as:A1}
\emph{(i)} The network carries $z_{k}$ and the remote node updates by
\eqref{eq:remote} \cite[Rem.~1]{Guo2023}. \emph{(ii)} The sensor
computes \eqref{eq:innov} from the true measurements, and $\xa_{k}$ is
fed back neither to the plant nor to the sensor filter
\cite[Assum.~2]{Wang2025}. \rev{\emph{(iii)} $m=1$, so that the score, the level set of Cor.~\ref{cor:optimal} and the feasible family of Prop.~\ref{prop:boundary} are scalar; Rem.~\ref{rem:m1} discusses $m>1$.} 
\end{assumption}

Under \ref{as:A1}(ii) the monitored signal is \emph{exogenous} to the
attack: corruption accumulates at the remote node while the signal being
corrupted is generated from the nominal stream, so $\Afil$ is the
\emph{local filter} error matrix and no loop is closed around the plant.

\subsection{Signs and Magnitudes Off Gaussianity}
\label{sec:signs}

By \eqref{eq:errdyn} the innovation is white,
$\Exp[z_{k}z_{k+\ell}^{\top}]=0$ for $\ell\ge1$, whatever the noise law.
\rev{Under Gaussian noise whiteness upgrades to independence and signs become
independent of magnitudes. Off Gaussianity the innovation is
uncorrelated but not independent, and the two are linked. The attack
below acts on signs alone, so the strength of that link governs what it
can and cannot evade.}

\begin{definition}
\label{def:css}
\rev{$\{z_{k}\}$ is \emph{conditionally sign symmetric} if, given the entire
magnitude process $\{|z_{j}|\}_{j}$, the signs $\{\sgn(z_{k})\}$ are
i.i.d.\ uniform on $\{\pm1\}$; in particular
$\Exp[\sgn(z_{k})\mid\mathcal{F}_{k-1},|z_{k}|]=0$, where
$\mathcal{F}_{k-1}=\sigma(z_{j}:j\le k-1)$.}
\end{definition}

Definition~\ref{def:css} holds automatically under Gaussian noise. 

\subsection{Detection and Threat Models}
\label{sec:threat}
The adversary does not know which detector runs at the remote node, so
we fix a class rather than a detector. A residual-based detector is a
measurable functional $g$ of $\{\zc_{j}\}_{j\le k}$ raising an alarm
when $g>\eta$, with $\eta$ calibrated on nominal data to a false-alarm
rate $\epsilon$. \rev{Let $\Gmag$ be those detectors measurable with respect
to the magnitude process $\sigma(\|\zc_{j}\|:j\le k)$ --- the memoryless
$\chi^{2}$ test $\zc_{k}{}^{\top}S^{-1}\zc_{k}>\eta$, its windowed and
CUSUM variants, and the Serial Detector of \cite{bonczek2021detection}
all lie in $\Gmag$ --- and $\Gsgn$ those measurable with respect to the full received sequence
but not its magnitudes alone: the innovation-whiteness tests standard in
fault detection --- lag-$\ell$ autocorrelation and Ljung--Box --- and sign-balance tests on $\sgn(\zc_{k})$.}

The adversary is a man-in-the-middle on the sensor-to-estimator link. It
observes the nominal transmitted stream for $N$ steps and may replace
the transmitted packet during the attack window. It has no access to
$(A,C,Q,R)$, $S$ or $K$, to the remote estimator's state, or to $g$ and
$\eta$, and does not know the innovation law. With
$\gamma_{k}\in\{0,1\}$ the firing indicator, the budget is
\begin{equation}
\limsup_{T\to\infty}\tfrac{1}{T}\textstyle\sum_{k=1}^{T}\gamma_{k}\;\le\;\Gbar .
\label{eq:budget}
\end{equation}
The problem is to design a causal trigger satisfying \eqref{eq:budget}
that maximizes degradation of $\xa_{k}$ and evades the detector class,
without the plant matrices and without any distributional assumption.

\section{The Attack Signal}
\label{sec:signal}

\subsection{The Worst-Case FDI Action}
\label{sec:action}
At firing instants the adversary transmits
\begin{equation}
\zc_{k}=(1-2\gamma_{k})\,z_{k}=
\begin{cases}
-z_{k}, & \gamma_{k}=1,\\
\phantom{-}z_{k}, & \gamma_{k}=0 .
\end{cases}
\label{eq:action}
\end{equation}
\rev{Two features distinguish this \emph{flip} from the measurement-space
injections common in the FDI literature. Under Assum.~\ref{as:A1}(i) it is \emph{model-free by construction}: the adversary negates a signal it already reads, so no prediction, filter copy or covariance is required. And it coincides with the model-aware optimum: under the linear attack family
\begin{equation}
z^{a}_{k}=F_{k}z_{k}+b_{k},\qquad
b_{k}\sim\mathcal{N}(0,\Sigma_{k})\ \text{independent of}\ z_{k},
\label{eq:family}
\end{equation}
subject to the \emph{covariance-matching} stealth constraint
\begin{equation}
F_{k}SF_{k}^{\top}+\Sigma_{k}=S,
\label{eq:guoconstraint}
\end{equation}
the maximizer of the remote error covariance is $F^{\star}_{k}=-I$,
$\Sigma^{\star}_{k}=0$ \cite[Thm.~3]{Guo2023}. This section shows the
\emph{certificate} \eqref{eq:guoconstraint} does not survive off
Gaussianity, while the action does.}

\subsection{Exact Stealth Against $\Gmag$}
\label{sec:magnitude}

\begin{theorem}
\label{thm:magnitude}
\rev{Under \eqref{eq:action}, $\|\zc_{k}\|=\|z_{k}\|$ for every $k$ on every
sample path, irrespective of the firing rule, the budget $\Gbar$, and
the distribution of $z_{k}$. Consequently every $g\in\Gmag$ satisfies
$g(\{\zc_{j}\})=g(\{z_{j}\})$ pathwise, and for every $\eta$ its
false-alarm rate under attack equals its nominal rate exactly.}
\end{theorem}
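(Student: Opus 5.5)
The plan is a direct pathwise argument in three steps. None of them uses the plant, the filter or any distributional property.

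\emph{Step 1 (norm identity).} Fix an arbitrary sample path and an arbitrary firing sequence $\{\gamma_k\}\subset\{0,1\}$; the firing rule may be causal or not, and may depend on anything. By \eqref{eq:action}, $\zc_k=(1-2\gamma_k)z_k$ with $1-2\gamma_k\in\{-1,+1\}$. Absolute homogeneity of the norm then gives
\begin{equation*}
\|\zc_k\|=|1-2\gamma_k|\,\|z_k\|=\|z_k\|
\end{equation*}
for every $k$. This holds deterministically, so it is insensitive to $\Gbar$, to the law of $z_k$ and to any dependence between $\gamma_k$ and $z_k$.

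\emph{Step 2 (detector invariance).} A detector $g\in\Gmag$ is, by definition, measurable with respect to $\sigma(\|\zc_j\|:j\le k)$. By the Doob--Dynkin lemma, at each $k$ there is a Borel map $h_k$ with
\begin{equation*}
g_k(\{\zc_j\}_{j\le k})=h_k(\|\zc_1\|,\dots,\|\zc_k\|).
\end{equation*}
The functional form of $g$ does not depend on whether the stream is attacked, so the same $h_k$ represents $g$ on nominal data. Substituting Step 1 into this representation gives $g(\{\zc_j\})=g(\{z_j\})$ on every path.

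\emph{Step 3 (false-alarm rate).} For any threshold $\eta$, the alarm events $\{g(\{\zc_j\})>\eta\}$ and $\{g(\{z_j\})>\eta\}$ coincide as sets of sample paths. Their probabilities are therefore equal, as are the empirical alarm frequencies along each path. In particular, a threshold calibrated to rate $\epsilon$ on nominal data yields rate exactly $\epsilon$ under attack.

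The only step needing care is Step 2. The issue is the interpretation of ``measurable with respect to the magnitude process'': it must mean that $g$ is a fixed measurable function of the magnitude sequence, not merely a function measurable with respect to a $\sigma$-algebra generated by one particular stream. I would state this convention explicitly and invoke Doob--Dynkin to obtain the representation $h_k$. After that, the result reduces to composing $h_k$ with the identity from Step 1.
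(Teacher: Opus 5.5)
Your proof is correct and follows the paper's argument. The sign factor $1-2\gamma_k\in\{\pm1\}$ gives the norm identity pathwise, any $g\in\Gmag$ is a fixed measurable function of the magnitude record and so takes the same value on both streams, and pathwise equality of the alarm indicator gives equal false-alarm rates. Your Doob--Dynkin step and the convention that $g$ is one fixed function of the magnitudes simply make explicit what the paper uses implicitly.
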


\begin{proof}
$\zc_{k}=\vsg_{k}z_{k}$ with $\vsg_{k}\in\{-1,+1\}$, so
$\|\zc_{k}\|=\|z_{k}\|$; any $g\in\Gmag$ is a measurable function of
$\{\|\zc_{j}\|\}_{j\le k}=\{\|z_{j}\|\}_{j\le k}$ and so takes identical
values on the two paths. Pathwise equality of $\mathbf1\{g>\eta\}$
implies equality of its expectation.
\end{proof}

\begin{remark}[Scope and reach]
\label{rem:scope}
\rev{No knowledge of $S$, $\eta$, $g$ or the filter enters
Thm.~\ref{thm:magnitude}. It is also what makes exact stealth compatible
with $\Gbar<1$: the identity holds on firing \emph{and} non-firing
steps, so no signature accumulates on the instants skipped. The
certificate does not reach a detector reading the received \emph{signs};
Sec.~\ref{sec:boundary} determines when it extends to $\Gsgn$.}
\end{remark}

\rev{Since $\|\zc_{k}\|=\|z_{k}\|$, a score built from magnitudes is
computable from the adversary's \emph{output} stream and the attack
cannot corrupt its own trigger; and since the certificate is a magnitude
identity, a rule that fires on signs leaves the guaranteed class. Every
score below is therefore a function of $|z_{k}|$.}


\subsection{Covariance Matching Is Necessary, Not Sufficient}
\label{sec:secondmoment}

\rev{A detector calibrated on nominal data is defeated when the received
sequence is indistinguishable from the nominal one. Within a jointly
Gaussian family the law is fixed by the second moment, so
\eqref{eq:guoconstraint} \emph{is} that requirement and needs no further
justification in \cite{Guo2023,Wang2025}. Off Gaussianity it is strictly
weaker.} Writing $F_{k}=f$, $\Sigma_{k}=\sigma_{b}^{2}$, constraint
\eqref{eq:guoconstraint} reads $f^{2}S+\sigma_{b}^{2}=S$, so the
feasible set is the one-parameter family
$\{|f|\le1,\ \sigma_{b}^{2}=S(1-f^{2})\}$.

\begin{proposition}
\label{prop:secondmoment}
\rev{Let $(f,\sigma_{b}^{2})$ be feasible for \eqref{eq:guoconstraint} with
$b_{k}$ Gaussian. Then:}
\begin{enumerate}
\rev{\item[(i)] every feasible pair satisfies $\Var(z^{a}_{k})=S$, while
  \begin{equation}
  \kexc(z^{a}_{k})=f^{4}\,\kexc(z_{k});
  \label{eq:kurtlaw}
  \end{equation}
\item[(ii)] if $\kexc(z_{k})\neq0$, every feasible pair with $|f|<1$ is
  distinguishable from nominal by a fourth-moment test; if in addition
  $z_{k}$ is symmetric, the only law-preserving feasible pairs are
  $f=+1$ (no attack) and $f=-1$, $\sigma_{b}^{2}=0$ (the flip);
\item[(iii)] independently of \emph{(i)}--\emph{(ii)}, the error
  $z_{k}-z^{a}_{k}$ injected into \eqref{eq:remote} at a firing instant
  has second moment $2S(1-f)$, strictly decreasing in $f$ 
  and uniquely maximized at $f=-1$, where it equals $4S$.}
\end{enumerate}
\end{proposition}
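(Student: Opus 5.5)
The three parts rest on elementary cumulant algebra for the scalar model $z^{a}_{k}=fz_{k}+b_{k}$, where $b_{k}\sim\mathcal N(0,\sigma_{b}^{2})$ is independent of $z_{k}$ and $m=1$ by Assum.~\ref{as:A1}(iii). I treat them in order.

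\emph{Part (i).} Cumulants of independent summands add, and $\kappa_r(fX)=f^{r}\kappa_r(X)$. The variance is therefore $f^{2}S+\sigma_{b}^{2}$, which equals $S$ by feasibility. For the fourth cumulant,
\[
\kappa_4(z^{a}_{k})=f^{4}\kappa_4(z_{k})+\kappa_4(b_{k})=f^{4}\kappa_4(z_{k}),
\]
because every Gaussian cumulant of order at least three vanishes. Dividing by $\Var(z^{a}_{k})^{2}=S^{2}=\Var(z_{k})^{2}$ gives \eqref{eq:kurtlaw}.

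\emph{Part (ii).} Suppose $\kexc(z_{k})\neq0$ and $|f|<1$. Then $f^{4}<1$, so $\kexc(z^{a}_{k})\neq\kexc(z_{k})$. Since the variances coincide, the fourth moments differ: $\Exp[(z^{a}_{k})^{4}]=\kappa_4+3S^{2}$. A test on the empirical fourth moment, calibrated on nominal data, is therefore consistent by the law of large numbers (a finite fourth moment is implicit once $\kexc$ is defined).

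For the law-preserving claim, note that equality of laws forces equality of all cumulants. By (i) this gives $f^{4}=1$, so $f=\pm1$ and hence $\sigma_{b}^{2}=0$. The case $f=+1$ is the nominal stream. For $f=-1$ the law of $-z_{k}$ equals that of $z_{k}$ exactly when $z_{k}$ is symmetric, which is where that hypothesis enters. The statement is about the marginal law, so I would note that this is the sense of "law-preserving" intended.

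\emph{Part (iii).} The injected error is $z_{k}-z^{a}_{k}=(1-f)z_{k}-b_{k}$. Its two terms are independent and zero-mean, so its second moment is $(1-f)^{2}S+\sigma_{b}^{2}$. Substituting $\sigma_{b}^{2}=S(1-f^{2})$ gives
\[
S\big[(1-f)^{2}+(1-f)(1+f)\big]=2S(1-f).
\]
This is linear and strictly decreasing in $f$ on $[-1,1]$, uniquely maximized at $f=-1$ with value $4S$, and no distributional assumption is used.

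The only step needing care is the converse in (ii): equality of the fourth cumulant must pin down $|f|=1$, which requires $\kappa_4(z_{k})$ to exist and be nonzero. That is exactly the hypothesis $\kexc(z_{k})\neq0$, so the argument needs nothing beyond it.
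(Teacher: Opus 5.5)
Your proposal is correct and follows the paper's proof essentially line for line. It uses cumulant additivity and homogeneity, with vanishing Gaussian higher cumulants, for (i); the fourth-cumulant mismatch forcing $f^4=1$, with symmetry then making $f=-1$ law-preserving, for (ii); and the direct computation $(1-f)^2S+S(1-f^2)=2S(1-f)$ for (iii). One small correction to your added consistency remark for the fourth-moment test: the innovation is stationary but not i.i.d. off Gaussianity, so the relevant limit theorem is Birkhoff's ergodic theorem (as in Thm.~\ref{thm:budget}(ii)) rather than the i.i.d. law of large numbers.
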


\begin{proof}
(i) By additivity and degree-$r$ homogeneity of cumulants,
$\kappa_{4}(z^{a})=f^{4}\kappa_{4}(z)$, the Gaussian term vanishing; the
constraint gives $\Var(z^{a})=S$, whence \eqref{eq:kurtlaw}.
(ii) 
For $|f|<1$ we have $f^{4}<1$, so by \eqref{eq:kurtlaw} the laws
differ in their fourth cumulant, by $(1-f^{4})|\kexc(z_{k})|$; law
preservation forces equality of fourth cumulants, hence $f^{4}=1$,
$|f|=1$, $\sigma_{b}^{2}=0$ and $z^{a}=\pm z$, and symmetry gives
$-z\overset{d}{=}z$.
(iii) $(1-f)z_{k}-b_{k}$ has second moment
$(1-f)^{2}S+S(1-f^{2})=2S(1-f)$.
\end{proof}

\rev{Every member of the family passes the covariance check, yet each distorts the fourth cumulant by a known factor, so a defender who looks past the second moment sees all of them, except the flip.}

\begin{remark}
\label{rem:hinge}
\rev{Under Gaussianity $\kexc(z)=0$, \eqref{eq:kurtlaw} is vacuous and
\eqref{eq:guoconstraint} suffices. Nor can a non-Gaussian $b_{k}$ remove
the obstruction: matching the received law at some $|f|<1$ would require
an independent $b$ with $\kappa_{r}(b)=(1-f^{r})\kappa_{r}(z)$ for all
$r\ge3$, i.e.\ \emph{self-decomposability} of the innovation law
\cite{ken1999levy}, which the adversary can neither verify nor exploit
without modeling that law. The Gaussian law being self-decomposable, the
classical treatment never meets the obstruction.}
\end{remark}

\rev{The signal is \emph{determined}, and every remaining degree
of freedom lies in the schedule $\{\gamma_{k}\}$.}

\section{The Schedule}
\label{sec:schedule}

\subsection{Distribution-Free Budget Calibration}
\label{sec:calibration}

Let $s_{k}$ be a causal score that is a function of magnitudes alone
with continuous marginal CDF $F_{s}$. The canonical choice, adopted
throughout, is $s_{k}=|z_{k}|$, which the adversary reads directly from
its own stream. Given a nominal calibration record $s_{1},\dots,s_{N}$
with order statistics $s_{(1)}\le\dots\le s_{(N)}$ and a budget
$\Gbar\ge 1/(N+1)$, set
\begin{equation}
\hat\varepsilon_{N}=s_{(j_{N})},\quad
j_{N}:=\left\lceil (N+1)(1-\Gbar)\right\rceil,\quad
\gamma_{k}=\mathbf1\{s_{k}>\hat\varepsilon_{N}\}.
\label{eq:conformal}
\end{equation}
The condition $\Gbar\ge1/(N+1)$ is what makes $j_{N}\le N$. If it fails,
the budget is unresolvable from $N$ samples and we set
$\hat\varepsilon_{N}=+\infty$, which never fires.

\begin{theorem}
\label{thm:budget}
\rev{Consider the trigger \eqref{eq:conformal}.}
\begin{enumerate}
\rev{\item[(i)] If $s_{1},\dots,s_{N+1}$ are exchangeable, then
  \begin{equation}
  \Prob\!\left(s_{N+1}>\hat\varepsilon_{N}\right)
  =1-\frac{j_{N}}{N+1}\;\le\;\Gbar ,
  \label{eq:conformalbound}
  \end{equation}
  and if every deployment score is exchangeable with the calibration
  record, $\Exp\big[T^{-1}\sum_{k\le T}\gamma_{k}\big]\le\Gbar$ at every
  horizon $T$.}
\item[(ii)] If $\{s_{k}\}$ is stationary and ergodic, then for fixed
  $N$, $T^{-1}\sum_{k\le T}\gamma_{k}\to1-F_{s}(\hat\varepsilon_{N})$
  a.s.; and if $F_{s}$ is strictly increasing at $F_{s}^{-1}(1-\Gbar)$,
  then $\hat\varepsilon_{N}\to F_{s}^{-1}(1-\Gbar)$ and
  $1-F_{s}(\hat\varepsilon_{N})\to\Gbar$ a.s.\ as $N\to\infty$.
\end{enumerate}
\end{theorem}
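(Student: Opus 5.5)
Part (i) is the standard split-conformal rank argument, and I would prove it first. Exchangeability of $s_{1},\dots,s_{N+1}$ and continuity of $F_{s}$ give ties with probability zero. Hence the rank $R$ of $s_{N+1}$ among the $N+1$ scores is uniform on $\{1,\dots,N+1\}$. Because $j_{N}\le N$ under $\Gbar\ge1/(N+1)$, the event $\{s_{N+1}>s_{(j_{N})}\}$, where $s_{(j_{N})}$ is the $j_{N}$-th order statistic of the calibration record alone, coincides with $\{R>j_{N}\}$. I will verify this coincidence carefully, since it is the only place where the indexing can slip. Its probability is $(N+1-j_{N})/(N+1)=1-j_{N}/(N+1)$. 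Since $j_{N}\ge(N+1)(1-\Gbar)$, this is at most $\Gbar$, which is \eqref{eq:conformalbound}. In the degenerate case $\hat\varepsilon_{N}=+\infty$ the trigger never fires and the bound is trivial.

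For the horizon statement in (i), I would apply the same bound to each $k\le T$, taking $s_{1},\dots,s_{N},s_{N+k}$ as the exchangeable $(N+1)$-tuple. This gives $\Exp[\gamma_{k}]\le\Gbar$, and linearity of expectation then gives $\Exp[T^{-1}\sum_{k\le T}\gamma_{k}]\le\Gbar$. No joint exchangeability across deployment steps is needed, only the pairwise-with-calibration property stated in the hypothesis.

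For part (ii), I would condition on the calibration record, so that $\hat\varepsilon_{N}=c$ is a fixed constant. The main subtlety here is that the calibration record and the deployment stream belong to the same stationary ergodic process, so $c$ is random and correlated with the stream. I would therefore use Birkhoff's ergodic theorem on the countable family of indicators $\mathbf1\{s_{k}>q\}$ with $q$ rational. This yields, on a single full-measure event, $T^{-1}\sum_{k\le T}\mathbf1\{s_{k}>q\}\to1-F_{s}(q)$ for all rational $q$ simultaneously. Monotonicity in $q$ together with continuity of $F_{s}$ (a Glivenko--Cantelli-type sandwich) extends the convergence uniformly in $q\in\mathbb R$. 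Evaluating at the random $q=\hat\varepsilon_{N}$ then gives the first claim. The finitely many calibration indices do not affect the Ces\`aro limit.

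For the second claim of (ii), as $N\to\infty$ the same ergodic theorem gives $\hat F_{N}\to F_{s}$ uniformly a.s. Because $j_{N}/N\to1-\Gbar$, the empirical quantile $s_{(j_{N})}$ converges to $F_{s}^{-1}(1-\Gbar)$ whenever $F_{s}$ is strictly increasing there. The argument is standard: for any $\delta>0$, $F_{s}(q^{\ast}\pm\delta)$ straddles $1-\Gbar$ strictly, and uniform convergence eventually traps the order statistic in $(q^{\ast}-\delta,q^{\ast}+\delta)$. Continuity of $F_{s}$ then gives $1-F_{s}(\hat\varepsilon_{N})\to\Gbar$.

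I expect the main obstacle to be the uniformity step in (ii) that handles the random, data-dependent threshold. The plan is to resolve it by the rational-grid-plus-monotonicity argument above, rather than by assuming independence between calibration and deployment.
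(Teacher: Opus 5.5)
Your proposal is essentially the paper's proof. For (i) it uses uniformity of the rank plus linearity of expectation; for (ii) it uses Birkhoff's theorem plus the ergodic Glivenko--Cantelli theorem and a quantile sandwich.

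In (ii) you are more careful than the paper. The paper applies Birkhoff directly to $\mathbf 1\{s_k>\hat\varepsilon_N\}$, but the threshold is a function of the same sample path, so those summands are not a fixed function of the shift. Your convergence simultaneously over a rational grid, extended by monotonicity and continuity of $F_s$, is exactly what licenses substituting the random threshold. You can drop the preliminary ``condition on the calibration record''. The pathwise uniform statement does not need it, and on its own it would fail, because conditioning a dependent stationary process on $s_1,\dots,s_N$ generally destroys the stationarity of the deployment stream.

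The one step that fails is the claim that exchangeability together with a continuous marginal $F_s$ rules out ties. It does not. Take $s_1=\dots=s_{N+1}=X$ with $X$ continuous: this is exchangeable with a continuous marginal, yet $\Prob(s_{N+1}>\hat\varepsilon_N)=0\neq 1-j_N/(N+1)$ whenever $j_N\le N$. So the equality in (i) needs an extra hypothesis, such as $\Prob(s_i=s_j)=0$ for $i\neq j$. That holds automatically for i.i.d.\ scores with continuous $F_s$, or for any jointly absolutely continuous law.

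The inequality $\le\Gbar$ survives ties, and it is all that the budget and horizon claims use. Set $c_m:=\#\{i\le N+1: s_i<s_m\}$; the event is $\{c_{N+1}\ge j_N\}$. Exchangeability gives $\Prob(c_{N+1}\ge j_N)=(N+1)^{-1}\Exp\,\#\{m: c_m\ge j_N\}$, and at most $N+1-j_N$ indices $m$ can satisfy $c_m\ge j_N$.

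The paper's proof makes the identical no-ties assertion. This is therefore a flaw inherited from the statement, not a departure from the paper's argument.
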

\begin{proof}
(i) Continuity of $F_{s}$ makes the $N+1$ scores a.s.\ distinct, so
under exchangeability the rank of $s_{N+1}$ is uniform and
$\Prob(s_{N+1}>s_{(j)})=1-j/(N+1)$; substituting
$j=j_{N}\ge(N+1)(1-\Gbar)$ gives \eqref{eq:conformalbound}, and
linearity of expectation the second claim.
(ii) Birkhoff's theorem applied to
$\mathbf1\{s_{k}>\hat\varepsilon_{N}\}$ gives the first limit; the
ergodic Glivenko--Cantelli theorem \cite{tucker1959generalization},
continuity of $F_{s}^{-1}$ at $1-\Gbar$, the continuous mapping theorem
\cite{vd1998asymptotic}, and $j_{N}/(N+1)\to1-\Gbar$ the second.
\end{proof}

\begin{remark}
\label{rem:index}
\rev{Part (ii) delivers \eqref{eq:budget} in the almost-sure sense in which
it is stated; part (i) bounds the expected rate at every finite horizon.
The one-index change from the naive
$\inf\{t:\hat F_{N}(t)\ge1-\Gbar\}$, which calibrates on the $N$
calibration points rather than the $N+1$ including the fresh one and so
over-fires, is split-conformal calibration~\cite{lei2018distribution}. Two qualifications delimit
it. First, \eqref{eq:conformalbound} bounds the \emph{mean} realized
rate: the conformal index is near median-unbiased, so the budget is
still exceeded on a substantial fraction of calibration draws, and a
training-conditional guarantee would need a more conservative index.
Second, part (i) assumes exchangeability, which the magnitude sequence
satisfies only approximately, being stationary but serially dependent;
part (ii) needs only stationarity and ergodicity and pays for dependence
in variance rather than bias. The \emph{Budget Calibration} results of
Sec.~\ref{sec:results}(c) validate both.}
\end{remark}

\begin{remark}
\label{rem:m1}
\rev{For $m=1$ the model-aware statistic
$\|S^{-1/2}z_{k}\|_{\infty}=|z_{k}|/\sqrt{S}$ and the model-free
magnitude $|z_{k}|$ differ by a positive factor, which preserves the
ordering of the calibration sample. Since an empirical quantile depends
on the sample only through that ordering, the two rules induce
\emph{identical} firing sets whenever both thresholds are set by
\eqref{eq:conformal}. Granting the adversary the plant model therefore
contributes nothing to budget accuracy in the scalar case: the binding
assumption in \cite{Guo2023,Wang2025} is the Gaussian tail inversion,
not the model. For $m>1$ the matrix $S^{-1/2}$ mixes channels, is no
longer monotone in $|z_{k}|$, and genuinely reorders firing instants.}
\end{remark}

\subsection{Distribution-Free Degradation}
\label{sec:degradation}

Let $d_{k}:=\xs_{k}-\xa_{k}$ be the divergence between the clean local
estimate and the corrupted remote one. Subtracting \eqref{eq:remote}
from the sensor recursion and using $z_{k}-\zc_{k}=2\gamma_{k}z_{k}$,
\begin{equation}
d_{k}=Ad_{k-1}+K(z_{k}-\zc_{k})=Ad_{k-1}+2\gamma_{k}Kz_{k}:
\label{eq:divergence}
\end{equation}
each firing injects a kick $2Kz_{k}$ proportional to the innovation at
that instant, and past kicks decay through $A$. With
$D_{k}:=\Exp[d_{k}d_{k}^{\top}]$, expanding \eqref{eq:divergence} gives
\begin{multline}
D_{k}=AD_{k-1}A^{\top}+4K\,\Exp[\gamma_{k}z_{k}z_{k}^{\top}]K^{\top}\\
+2A\,\Exp[d_{k-1}\gamma_{k}z_{k}^{\top}]K^{\top}+(\cdot)^{\top},
\label{eq:Drec}
\end{multline}
whose cross terms vanish under Def.~\ref{def:css}: $d_{k-1}$ is
$\mathcal{F}_{k-1}$-measurable and $\gamma_{k}$ is a function of
magnitudes, so conditioning on $\mathcal{F}_{k-1}$ and $|z_{k}|$ leaves
$\Exp[\sgn(z_{k})\mid\cdot]=0$; Sec.~\ref{sec:results}(f) measures what
survives when Def.~\ref{def:css} fails.

We take $\tr(D_{k})$ as the measure of attack effect: it is the
component of the remote error attributable to the attack and vanishes in
its absence. It also avoids a hypothesis $\tr(P^{a}_{k})$ would require,
namely independence of $e^{s}_{k}$ from the innovation sequence, which
off Gaussianity does not follow from the Kalman orthogonality
$\Exp[e^{s}_{k}z_{j}^{\top}]=0$ because $\gamma_{j}$ acts nonlinearly on
$z_{j}$. 

\begin{theorem}
\label{thm:degradation}
\rev{Let $\rho(A)<1$, let $\{z_{k}\}$ be stationary and satisfy
Def.~\ref{def:css}, let $\gamma_{k}$ be a function of magnitudes alone
with $\Prob(\gamma_{k}=1)=\Gbar$, and let $D_{0}=0$. With the \emph{energy capture} $\Psi:=\Exp[\gamma_{k}z_{k}^{2}]/S$,
\begin{multline}
D_{\infty}=AD_{\infty}A^{\top}+4\Psi S KK^{\top}, \\ \tr(D_{\infty})=4S\Psi\,\tr\Lyap(A,KK^{\top}),
\label{eq:Dinf}
\end{multline}

and $\hat\Psi_{N}:=\sum_{k\le N}\gamma_{k}s_{k}^{2}/\sum_{k\le N}s_{k}^{2}$
is strongly consistent for $\Psi$ under the hypotheses of
Thm.~\ref{thm:budget}(ii).}
\end{theorem}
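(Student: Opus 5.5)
The plan is to start from the covariance recursion \eqref{eq:Drec} and show that, under the hypotheses, it becomes a time-invariant Lyapunov recursion driven by a constant forcing term. There are three steps.

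\textbf{Step 1: the cross terms vanish.} Since $m=1$, write $z_{k}=\sgn(z_{k})|z_{k}|$. The quantity $d_{k-1}$ is $\mathcal{F}_{k-1}$-measurable by \eqref{eq:divergence}, and $\gamma_{k}$ is a function of magnitudes alone. I will condition on $\mathcal{F}_{k-1}\vee\sigma(|z_{k}|)$, and more generally on the whole magnitude process together with past signs where $\gamma_{k}$ depends on it. This gives
\[
\Exp[d_{k-1}\gamma_{k}z_{k}]=\Exp\big[d_{k-1}\gamma_{k}|z_{k}|\,\Exp[\sgn(z_{k})\mid\mathcal{F}_{k-1},|z_{k}|]\big]=0
\]
by Def.~\ref{def:css}.

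This is the step I expect to be the main obstacle. If the trigger depends on magnitudes beyond time $k$, or on the calibration record, the conditioning $\sigma$-algebra must contain $\gamma_{k}$. I would therefore use the stronger form of Def.~\ref{def:css}: conditionally on the entire magnitude process, the signs are i.i.d.\ uniform. That makes $\sgn(z_{k})$ independent of $(\{|z_{j}|\},\sgn(z_{j}):j<k)$, so the factorization holds for any magnitude-measurable $\gamma_{k}$. Integrability also needs a check, using finite second moments and boundedness of $D_{k}$, which follows from Step 2.

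\textbf{Step 2: constant forcing and the fixed point.} Stationarity of $\{z_{k}\}$, together with the fact that $\gamma_{k}$ is the same function of magnitudes at every step, makes $\Exp[\gamma_{k}z_{k}^{2}]=\Psi S$ constant in $k$. The recursion \eqref{eq:Drec} therefore reduces to
\[
D_{k}=AD_{k-1}A^{\top}+4\Psi S\,KK^{\top},\qquad D_{0}=0 .
\]
Iterating gives $D_{k}=4\Psi S\sum_{i=0}^{k-1}A^{i}KK^{\top}(A^{\top})^{i}$. Because $\rho(A)<1$, this series converges to $4\Psi S\,\Lyap(A,KK^{\top})$, the unique solution of the stated Lyapunov equation. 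Taking the trace of the limit gives the scalar identity in \eqref{eq:Dinf}.

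\textbf{Step 3: consistency of $\hat\Psi_{N}$.} I will write
\[
\hat\Psi_{N}=\frac{N^{-1}\sum_{k}\gamma_{k}s_{k}^{2}}{N^{-1}\sum_{k}s_{k}^{2}} .
\]
Under stationarity and ergodicity, Birkhoff's theorem sends the denominator to $\Exp[z_{k}^{2}]=S>0$. When $\gamma_{k}=\mathbf1\{s_{k}>\varepsilon\}$ for a fixed cut point $\varepsilon$, the same theorem sends the numerator to $\Exp[\gamma_{k}z_{k}^{2}]$, and the ratio converges almost surely to $\Psi$.

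For the data-driven threshold $\hat\varepsilon_{N}$, the cut point itself moves with $N$. I would handle this with Theorem~\ref{thm:budget}(ii), which gives $\hat\varepsilon_{N}\to\varepsilon^{\star}:=F_{s}^{-1}(1-\Gbar)$ almost surely. The error from using $\hat\varepsilon_{N}$ instead of $\varepsilon^{\star}$ is at most
\[
N^{-1}\sum_{k}s_{k}^{2}\,\mathbf1\{s_{k}\in[\varepsilon^{\star}-\delta,\varepsilon^{\star}+\delta]\}
\]
once $|\hat\varepsilon_{N}-\varepsilon^{\star}|<\delta$. By ergodicity this average tends to $\Exp[s^{2}\mathbf1\{|s-\varepsilon^{\star}|\le\delta\}]$, which goes to $0$ as $\delta\downarrow0$ because $F_{s}$ is continuous. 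Letting $\delta\downarrow0$ along a countable sequence gives strong consistency.
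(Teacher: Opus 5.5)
Your proposal is correct and follows the paper's proof: the cross terms in \eqref{eq:Drec} vanish by Def.~\ref{def:css} (argued in the paper just before the theorem), stationarity gives the constant forcing $4\Psi S KK^{\top}$, $\rho(A)<1$ gives convergence to the Lyapunov solution, and Birkhoff's theorem together with Thm.~\ref{thm:budget}(ii) gives consistency of $\hat\Psi_{N}$. Your explicit series for $D_{k}$, your use of the full magnitude-process form of Def.~\ref{def:css} for triggers not adapted to $\mathcal F_{k-1}\vee\sigma(|z_{k}|)$, and your $\delta$-sandwich for the moving threshold $\hat\varepsilon_{N}$ only fill in details the paper leaves implicit; the integrability check in Step~1 also does not need Step~2, since for each finite $k$ the vector $d_{k-1}$ is a finite sum of square-integrable kicks $2\gamma_{j}Kz_{j}$.
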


\begin{proof}
Stationarity makes the driving term of \eqref{eq:Drec} equal to
$4\Psi S KK^{\top}$ at every $k$. With the cross terms removed and
$\rho(A)<1$ the recursion converges to the unique solution of the
discrete Lyapunov equation, whose linearity in the driving term gives
the trace identity. Consistency follows from Birkhoff's theorem applied
to $\gamma_{k}z_{k}^{2}$ and $z_{k}^{2}$, with
Thm.~\ref{thm:budget}(ii) supplying convergence of the threshold.
\end{proof}


Since $4S$ and $\tr\Lyap(A,KK^{\top})$ are properties of the plant and
filter that no schedule can alter, the adversary's entire influence is
isolated in the single scalar $\Psi$ --- the share of innovation energy
it gets to corrupt.

\subsection{The Optimal Schedule}
\label{sec:optimal}

By \eqref{eq:Dinf} the design problem
\begin{equation}
\max_{\gamma}\ \tr(D_{\infty})
\quad\text{s.t.}\quad \Prob(\gamma_{k}=1)=\Gbar
\label{eq:designproblem}
\end{equation}
is equivalent to maximizing $\Psi$ alone. Here $\gamma$ is confined to $\Gmag$, which by Sec.~\ref{sec:magnitude} keeps the schedule inside the class the action is certified against.

\begin{corollary}
\label{cor:optimal}
\rev{Let $F_{|z|}$ be continuous. Then:}
\begin{enumerate}
\rev{\item[(i)] for every magnitude-measurable rule of rate $\Gbar$,
  $\Psi=\Gbar+\Cov(\gamma_{k},z_{k}^{2})/S$, so the budget fixes the
  first term and a schedule competes only through the second;
\item[(ii)] $\Psi$ is maximized by the memoryless upper level set
  \begin{equation}
  \gamma^{\star}_{k}=\mathbf1\{|z_{k}|>q\},
  \qquad q:=F^{-1}_{|z|}(1-\Gbar),
  \label{eq:optrule}
  \end{equation}
  and any maximizer agrees with $\gamma^{\star}$ almost everywhere;
  under the hypotheses of Thm.~\ref{thm:degradation}, $\gamma^{\star}$
  therefore solves \eqref{eq:designproblem};
\item[(iii)] at the optimum
  $\Psi^{\star}=\Exp[z_{k}^{2}\mathbf1\{|z_{k}|>q\}]/S$, the share of
  the second moment carried by the upper $\Gbar$-tail of $|z_{k}|$.}
\end{enumerate}
\end{corollary}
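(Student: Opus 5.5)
The plan is a rearrangement (bathtub / Neyman--Pearson) argument on the scalar $u:=z_{k}^{2}$. Since $m=1$, magnitude-measurability lets us write $\gamma_{k}$ as a $\{0,1\}$-valued measurable function of the magnitude process. Stationarity means we work with a single time $k$ throughout.

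\textbf{Part (i).} This is an identity. Write $\Exp[\gamma_{k}z_{k}^{2}]=\Cov(\gamma_{k},z_{k}^{2})+\Exp[\gamma_{k}]\Exp[z_{k}^{2}]$. Substitute $\Exp[\gamma_{k}]=\Gbar$ and $\Exp[z_{k}^{2}]=S$ from \eqref{eq:innov}, then divide by $S$.

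\textbf{Part (ii).} Here I reduce to an exchange inequality. Let $\gamma$ be any $[0,1]$-valued rule with $\Exp[\gamma]=\Gbar$; relaxing to $[0,1]$ costs nothing. Continuity of $F_{|z|}$ gives $\Prob(|z_{k}|>q)=\Gbar$ exactly, so $\gamma^{\star}$ is feasible. Then consider
\begin{equation*}
\Exp\big[(\gamma^{\star}_{k}-\gamma_{k})(z_{k}^{2}-q^{2})\big]\ge0 .
\end{equation*}
The integrand is pointwise nonnegative: where $|z_{k}|>q$ we have $\gamma^{\star}-\gamma\ge0$ and $z^{2}-q^{2}>0$, and where $|z_{k}|\le q$ both factors are $\le0$. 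Because $\Exp[\gamma^{\star}-\gamma]=0$, the $q^{2}$ term drops out, which gives $\Exp[\gamma^{\star}z^{2}]\ge\Exp[\gamma z^{2}]$, i.e.\ $\Psi(\gamma^{\star})\ge\Psi(\gamma)$.

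For uniqueness, suppose equality holds. Then the nonnegative integrand vanishes a.s., so $\gamma=\gamma^{\star}$ a.s.\ on $\{|z_{k}|\neq q\}$. The set $\{|z_{k}|=q\}$ is null by continuity, so $\gamma=\gamma^{\star}$ almost everywhere.

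This step covers rules that depend on the whole magnitude history, not only on $|z_{k}|$. The pointwise argument never used memorylessness of $\gamma$, only that it is $[0,1]$-valued with mean $\Gbar$. This is the point I expect to need the most care: the optimum is memoryless even though the competitors need not be.

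Finally, under the hypotheses of Thm.~\ref{thm:degradation}, $\tr(D_{\infty})=4S\Psi\,\tr\Lyap(A,KK^{\top})$. The factor $4S\,\tr\Lyap(A,KK^{\top})$ is positive and independent of $\gamma$, since $\Lyap(A,KK^{\top})\succeq KK^{\top}$ and $K\neq0$. Hence maximizing $\Psi$ solves \eqref{eq:designproblem}. Note also that $\gamma^{\star}$ satisfies the Def.~\ref{def:css} requirement of being a function of magnitudes, so the theorem applies to it.

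\textbf{Part (iii).} Substitute $\gamma^{\star}$ into $\Psi=\Exp[\gamma_{k}z_{k}^{2}]/S$. The interpretation as the tail share of the second moment is just this expression read off.

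One remark ties the result back to the budget calibration: $q$ is the population limit of the order statistic $\hat\varepsilon_{N}$ by Thm.~\ref{thm:budget}(ii). That link is commentary, not part of the proof.
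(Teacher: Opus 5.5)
Your proof is correct and follows essentially the paper's argument. Part (i) uses the same covariance identity, and part (ii) uses the same pointwise sign-opposition (bathtub) inequality $\Exp[(\gamma^{\star}-\gamma)(z_{k}^{2}-q^{2})]\ge0$, with the $q^{2}$ term cancelling because the rates are equal. The one difference is that the paper first projects a history-dependent rule onto $|z_{k}|$ via $p(|z_{k}|)=\Exp[\gamma_{k}\mid|z_{k}|]$, whereas you apply the inequality directly to $\gamma_{k}$. That is equally valid, since it uses only $\gamma_{k}\in[0,1]$, and it gives uniqueness for $\gamma_{k}$ itself rather than for its conditional mean.
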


\begin{proof}
(i) $\Exp[\gamma z^{2}]=\Exp[\gamma]\Exp[z^{2}]+\Cov(\gamma,z^{2})$.
(ii) With $p(|z_{k}|):=\Exp[\gamma_{k}\mid|z_{k}|]\in[0,1]$,
$\Exp[p]=\Gbar$, and $p^{\star}=\mathbf1\{|z|>q\}$, the sign of
$p-p^{\star}$ opposes that of $z^{2}-q^{2}$ pointwise, so
$\Exp[(p-p^{\star})z^{2}]\le q^{2}\Exp[p-p^{\star}]=0$, with equality
only if $p=p^{\star}$ off $\{|z|=q\}$; optimality for
\eqref{eq:designproblem} follows from \eqref{eq:Dinf}, increasing in
$\Psi$. (iii) Substitute $\gamma^{\star}$ into $\Psi$.
\end{proof}

\rev{Equation \eqref{eq:optrule} is the trigger \eqref{eq:conformal} with the
population quantile in place of the order statistic, and
$\hat\varepsilon_{N}$ is a consistent estimate of $q$ by
Thm.~\ref{thm:budget}(ii): \emph{one order statistic delivers both the
finite-sample budget guarantee and the damage optimum}, at one
comparison per step and with neither the plant matrices nor any
distributional assumption (Alg.~\ref{alg:main}). The Gaussian
tail-inversion rule of \cite{Guo2023,Wang2025} is also an upper level
set of $|z_{k}|$, so the two differ only in realized rate --- and by (i),
which separates rate from selection quality, schedulers must therefore
be compared at $\Psi$ rather than on damage.}

\begin{remark}
\label{rem:gaussian}
\rev{Let $z_{k}\sim\mathcal N(0,S)$ be i.i.d. Def.~\ref{def:css} then holds,
so by Prop.~\ref{prop:boundary} below the flip is stealthy against
\emph{every} residual-based detector. The threshold of
\eqref{eq:optrule} acquires the closed form
$q=\sqrt{S}\,\Qtail^{-1}(\Gbar/2)$, so $\hat\varepsilon_{N}$ converges
to the Gaussian tail-inversion threshold of \cite[Thm.~1]{Guo2023}; and
with $c:=\Qtail^{-1}(\Gbar/2)$ and $\varphi$ the standard normal
density, the energy capture takes the Gaussian truncated-second-moment
form $\Psi_{\mathcal G}=\Gbar+2c\,\varphi(c)=1-(1-\Gbar)(1-\beta(c))$
with $\beta(c)=2c\varphi(c)/(1-\Gbar)$ the truncation factor of
\cite[eq.~(44)]{Guo2023}. Thm.~\ref{thm:degradation} then reduces to the
closed-form degradation of \cite{Guo2023,Wang2025} and
Cor.~\ref{cor:optimal} to their optimal schedule, the attack-induced
excess being $4\,\Psi_{\mathcal G}\,\tr(KC\bar P)$ per step. This is the
single point at which the distributional assumption enters the damage
prediction, and the \emph{Damage} results of Sec.~\ref{sec:results}(f)
quantify what it costs outside its regime.}
\end{remark}

\begin{algorithm}[t]
\caption{Distribution-free budgeted attack scheduling}
\label{alg:main}
\begin{algorithmic}[1]
\Require budget $\Gbar$, calibration length $N\ge\Gbar^{-1}-1$
\State observe the nominal stream $z_{1},\dots,z_{N}$; $s_{k}\gets|z_{k}|$
\State $\hat\varepsilon_{N}\gets s_{(\lceil(N+1)(1-\Gbar)\rceil)}$
\State $\hat\Psi\gets
       \sum_{k\le N}\mathbf1\{s_{k}>\hat\varepsilon_{N}\}\,s_{k}^{2}
       \big/ \sum_{k\le N}s_{k}^{2}$
\For{$k=1,\dots,T$}
  \State transmit $\zc_{k}\gets-z_{k}$ if $|z_{k}|>\hat\varepsilon_{N}$,
         else $\zc_{k}\gets z_{k}$
\EndFor
\end{algorithmic}
\end{algorithm}

\subsection{The Boundary and the Damage--Exposure Coupling}
\label{sec:boundary}


\begin{proposition}
\label{prop:boundary}
\rev{Let the firing rule be a function of magnitudes alone,
$\gamma_k=\phi(\lvert z_{k-L}\rvert,\dots,\lvert z_k\rvert)$, and set
$A_k:=\{\gamma_k=1\}$. If $\{z_k\}$ is conditionally sign symmetric then
$\{z^c_k\}\overset{d}{=}\{z_k\}$ as processes, and every detector, in
$\mathcal G_{\rm mag}$ or $\mathcal G_{\rm sgn}$, retains its nominal false-alarm rate.
Conversely, Def.~\ref{def:css} fails whenever
\begin{equation}\label{eq:fourth}
\mathbb E[z_k^3z_{k+1}]
 = CA_f\operatorname{cum}(z_k,z_k,z_k,e_k) - CAK\,\kappa_4(v_k)\neq0,
\end{equation}
in which case the received sequence carries the induced lag-one autocovariance
\begin{equation}\label{eq:induced}
\mathbb E[z^c_kz^c_{k+1}] = -2\,(T_{10}+T_{01}),
\end{equation}
$T_{10}:=\mathbb E[\mathbf 1_{A_k}\mathbf 1_{A^c_{k+1}}z_kz_{k+1}]$ and
$T_{01}:=\mathbb E[\mathbf 1_{A^c_k}\mathbf 1_{A_{k+1}}z_kz_{k+1}]$, and detectors in
$\mathcal G_{\rm sgn}$ acquire power. Both terms in \eqref{eq:fourth} are fourth cumulants
and vanish for Gaussian $w,v$: under Gaussian noise the flip is stealthy against every
residual-based detector, and the exposure is a purely non-Gaussian phenomenon.}
\end{proposition}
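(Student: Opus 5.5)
The proof has three parts: the forward direction, the converse (the fourth-moment identity and the failure of Def.~\ref{def:css}), and the induced autocovariance.

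\emph{Forward direction.} Write $z_k=\sigma_k|z_k|$ with $\sigma_k:=\sgn(z_k)$, and condition on the whole magnitude process $\{|z_j|\}_j$. Since $\gamma_k=\phi(|z_{k-L}|,\dots,|z_k|)$, every $\gamma_k$ is fixed by this conditioning. The received signs are $\sigma^c_k=(1-2\gamma_k)\sigma_k$, i.e.\ the sign sequence multiplied coordinatewise by a deterministic $\pm1$ sequence. Under Def.~\ref{def:css} the $\sigma_k$ are conditionally i.i.d.\ uniform on $\{\pm1\}$. Coordinatewise multiplication by fixed signs is a bijection of $\{\pm1\}^{\mathbb Z}$ that preserves the uniform product measure, so the conditional law of $\{\sigma^c_k\}$ given magnitudes equals that of $\{\sigma_k\}$. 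Magnitudes are unchanged by Thm.~\ref{thm:magnitude}. Integrating over the magnitude law gives $\{z^c_k\}\overset{d}{=}\{z_k\}$ on finite-dimensional cylinders, hence as processes. Any measurable $g$ then has the same law under attack and nominally, so $\Prob(g>\eta)$ is unchanged.

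\emph{Converse, computing $\Exp[z_k^3z_{k+1}]$.} By \eqref{eq:errdyn},
\[
z_{k+1}=CA_fe_k+Cw_k-CAKv_k+v_{k+1}.
\]
The noises $w_k$ and $v_{k+1}$ are independent of $z_k$ and zero-mean, so their contributions vanish. This leaves
\[
\Exp[z_k^3z_{k+1}]=CA_f\Exp[z_k^3e_k]-CAK\,\Exp[z_k^3v_k].
\]
\begin{itemize}
\item \emph{First term.} All variables have zero mean, so the mixed fourth moment expands as cumulant plus pair products: $\Exp[z^3e]=\operatorname{cum}(z,z,z,e)+3\Exp[z^2]\Exp[ze]$. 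Kalman orthogonality gives $\Exp[z_ke_k^\top]=C\bar P$, so I have to check how this pair term is absorbed.
\item \emph{Second term.} With $z_k=Ce_k+v_k$ and $e_k\perp v_k$, only $\Exp[v_k^4]$ and the cross term $3\Exp[(Ce_k)^2]\Exp[v_k^2]$ survive in $\Exp[z_k^3v_k]$. Thus $\Exp[z^3v]=\kappa_4(v)+3SR$.
\item \emph{Pair terms.} Combining, the Gaussian-type pair contributions are $3S\bigl(CA_f\bar PC^\top-CAKR\bigr)=3S\,CA(\bar PC^\top-K S)=0$, using $A_f=A(I-KC)$ and $K=\bar PC^\top S^{-1}$. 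This cancellation is also just whiteness: it is $3S\,\Exp[z_kz_{k+1}]=0$.
\end{itemize}
What remains is \eqref{eq:fourth}, where I interpret $\operatorname{cum}(z,z,z,e)$ as the joint cumulant with the pair part removed. The main obstacle is this bookkeeping, namely checking that the pair terms cancel exactly and nothing else survives.

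\emph{Failure of Def.~\ref{def:css}.} If Def.~\ref{def:css} holds, then
\[
\Exp[z_k^3z_{k+1}]=\Exp\bigl[|z_k|^3|z_{k+1}|\,\Exp[\sigma_k\sigma_{k+1}\mid\text{mags}]\bigr]=0,
\]
since the signs are conditionally independent and centred. So a nonzero value contradicts the definition. For Gaussian $w,v$ all fourth cumulants vanish, so \eqref{eq:fourth} is zero, consistent with Def.~\ref{def:css} holding automatically.

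\emph{Induced autocovariance.} Write $\zc_k\zc_{k+1}=(1-2\gamma_k)(1-2\gamma_{k+1})z_kz_{k+1}$ and split by the indicators. On $A_k\cap A_{k+1}$ and on $A_k^c\cap A_{k+1}^c$ the product is unchanged. On the two mixed events it is negated. Hence
\[
\Exp[\zc_k\zc_{k+1}]=\Exp[z_kz_{k+1}]-2(T_{10}+T_{01}),
\]
and whiteness gives \eqref{eq:induced}. When sign symmetry fails, $T_{10}+T_{01}$ is generically nonzero: it is a magnitude-weighted version of $\Exp[z_k^3z_{k+1}]$, so nonzero \eqref{eq:fourth} drives it. A lag-one autocorrelation test in $\Gsgn$ then has asymptotic power. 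This last claim I would state as "generically" rather than prove in full, since showing $T_{10}+T_{01}\neq0$ from \eqref{eq:fourth} alone needs an extra argument.
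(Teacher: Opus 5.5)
Your proposal is correct and follows the paper's proof essentially step for step: the same conditional sign-flip invariance argument for sufficiency, the same cumulant-plus-pair expansion of $\Exp[z_k^3 z_{k+1}]$ with the pair terms cancelling via $CA(\bar PC^\top-KS)=0$ (the paper writes this as $\Afil\bar PC^\top=AKR$), and the same split of $(1-2\gamma_k)(1-2\gamma_{k+1})$ into concordant and discordant events. You are right that $T_{10}+T_{01}\neq0$ does not follow from the fourth-cumulant condition alone. The paper's proof does not establish this either, so your "generically" caveat costs nothing relative to the original argument.
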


\begin{proof}
Sufficiency: conditionally on the magnitude process the signs are
i.i.d.\ uniform and $\vsg_{k}$ is magnitude measurable, so
$\{\vsg_{k}\sgn(z_{k})\}$ has the same conditional law as
$\{\sgn(z_{k})\}$; since $\zc_{k}=\vsg_{k}\sgn(z_{k})|z_{k}|$ and the
magnitudes are unchanged, integrating gives
$\{\zc_{k}\}\overset{d}{=}\{z_{k}\}$. Necessity: Def.~\ref{def:css}
gives $\Exp[\sgn(z_{k})\sgn(z_{k+1})\mid\{|z_{j}|\}]=0$ and hence
$\Exp[z_{k}^{3}z_{k+1}]=0$; writing each expectation as a joint
cumulant plus its Gaussian part and using $e\perp v$, the Gaussian parts cancel by
$A_f\bar PC^\top=A(\bar PC^\top-K(S-R))=AKR$, which gives \eqref{eq:fourth}. Finally
$(1-2\gamma_k)(1-2\gamma_{k+1})=1-2\gamma_k-2\gamma_{k+1}+4\gamma_k\gamma_{k+1}$ and
$\mathbb E[z_kz_{k+1}]=0$ give \eqref{eq:induced}, the concordant terms canceling.
\end{proof}

\rev{The magnitude-measurability hypothesis is the constraint already imposed
in Sec.~\ref{sec:magnitude}. A sign-dependent rule such as
$\gamma_{k}=\mathbf1\{z_{k}>\varepsilon\}$ makes the received marginal
one sided in every regime, which under Gaussian innovations is invisible
to whiteness tests but is exposed by a sign-balance test, and off
Gaussianity destroys whiteness as well. Sec.~\ref{sec:results}(g) 
quantifies both.}


\begin{remark}
\label{rem:coupling}
\rev{Cor.~\ref{cor:optimal} and Prop.~\ref{prop:boundary} concern the same
quantity. If $\gamma_{k}$ is independent of $\{z_{j}\}$ then
$\Exp[\zc_{k}\zc_{k+\ell}]=0$ for every $\ell\ge1$ and every
autocorrelation detector in $\Gsgn$ keeps its nominal rate --- but then
$\Cov(\gamma_{k},z_{k}^{2})=0$ and $\Psi=\Gbar$, the value attained by
firing blind. At the other extreme $\gamma^{\star}$ concentrates the
firing set on the tail, which is exactly where the sign--magnitude
dependence of \eqref{eq:fourth} lives. Damage and residual
detectability are therefore two readings of one number, and off
Gaussianity the adversary cannot raise the first without raising the
second. Both terms of \eqref{eq:induced} require one instant to fire while its neighbor does not, so each vanishes as $\bar\Gamma\to0$ and as $\bar\Gamma\to1$: the induced autocovariance, and with it the power of a whiteness monitor, must attain an interior extremum in the budget. Scheduling under an explicit exposure constraint requires leaving the memoryless class of \eqref{eq:optrule} and is left to future work.}
\end{remark}

%% file: 1_experiment_v4b.tex
\section{Results}
\label{sec:results}

\paragraph{Setup}
\label{sec:setup-results}
Two regimes share the plant \eqref{eq:plant} with
$A=\left[\begin{smallmatrix}0.95&0.02\\0&0.90\end{smallmatrix}\right]$,
$C=[1\ \ 0]$, $Q=0.01I_{2}$ and $R=0.05$, so $m=1$, $\rho(A)=0.95$,
$S=0.0752$, $K=[0.335\ \ 0.029]^{\top}$, $\tr\Lyap(A,KK^{\top})=1.182$, and a nominal posterior error trace of $0.0690$. Under \textbf{N1} both
noises are Gaussian. Under \textbf{N2} the sensor noise is the mixture
$0.95\,\mathcal N(0,0.0227)+0.05\,\mathcal N(0,0.568)$, whose variance is
$0.0500=R$ by construction, so both regimes present the same $S$ and
distributional shape is the only manipulated variable. The filter gives
$\kexc(z)=7.49$ (heavy-tail) against $0.00$ under N1.

The third regime is real. \textbf{CSU} is a $15{,}123$\,s J1939 CAN log
from a heavy-duty truck~\cite{biggs2024modeling} on a $10$\,Hz grid. A
CAN bus carries measurements, not the transmitted innovation, so the
innovation is estimated without a plant model: $\Afil$ being Schur, the
innovation form admits an infinite-order VARX representation with
geometrically decaying coefficients~\cite{anderson2005optimal}, which we
fit by high-order least squares on nominal data. On ground-truth data
with the model hidden the fit recovers the innovation sequence with
correlation at least $0.993$ and $S$ to within $0.5\%$. Stationarity
masks leave $15$ segments and $118{,}081$ samples; the record gives
$\hat\kappa_{\rm exc}=28.5$.

Three schedulers are compared. \textbf{B1} inverts the Gaussian tail as
in \cite{Guo2023,Wang2025}. \rev{\textbf{B2} takes the empirical quantile of
the model-aware statistic $\|S^{-1/2}z_{k}\|_{\infty}$. \textbf{B3} is
Algorithm~\ref{alg:main}. Calibration uses $N=5000$ nominal samples,
deployment $T=50{,}000$ steps and $200$ Monte-Carlo realizations. On CSU each segment is split exchangeably and every scheduler is granted the exact
sample covariance $S$, so B1's error there is pure distributional shape.
Detector thresholds are calibrated empirically on a long nominal record
to a $1\%$ false-alarm rate.} 

\rev{\paragraph{Budget Accuracy}
\label{sec:res-budget}
Fig.~\ref{fig:budget} report the signed
error of the realized firing rate. Under N1 every scheduler tracks the
budget to within $0.06$~pp. B2 and B3 realize identical firing sets as $\max|\hat\Gamma_{\mathrm{B2}}-\hat\Gamma_{\mathrm{B3}}|=0$ over all $2400$ simulated comparisons, exactly as Rem.~\ref{rem:m1} predicts, so
B2 is not plotted separately. Under N2, B1 departs from the target by up to $8.76$~pp, a third of the budget in relative terms, and \emph{changes sign} between $\Gbar=0.02$ and $\Gbar=0.05$ --- the signature of leptokurtosis, the matched-variance mixture being simultaneously more peaked and heavier-tailed than the Gaussian. CSU sharpens the pattern ($+26\%$ at $\Gbar=0.02$, $-43\%$ at
$\Gbar=0.20$), and at $\Gbar=0.02$ B1 violates the hard budget
constraint \eqref{eq:budget} in $80\%$ of segments, precisely the
operating point an energy-limited adversary cares about. B1's error is bias: its Monte-Carlo spread is three times
\emph{tighter} than the empirical rules', since it has no estimation
step for more data to improve. Algorithm~\ref{alg:main} holds within
$\pm0.10$~pp on the simulated regimes and $1.14$~pp on CSU under no
distributional assumption, as Thm.~\ref{thm:budget}(ii) asserts.}

\rev{\paragraph{Budget Calibration}
\label{sec:calib}The one-index change of \eqref{eq:conformal} earns its place at short
records: at $N=100$, $\Gbar=0.02$ it realizes an exceedance rate of
$0.0198$ against $0.0293$ for the naive quantile. It violates the budget
on $41\%$ of draws against $67\%$, with the conformal fraction staying
in $[0.41,0.50]$ everywhere as Rem.~\ref{rem:index} predicts.  }

\rev{\paragraph{Stealth}\label{sec:stealth}
Theorem~\ref{thm:magnitude} is an algebraic identity, so an experiment
can only check that the implementation realizes it. Across seven
detectors --- memoryless and windowed $\chi^{2}$, CUSUM, CUSIGN, and the
magnitude, sign and combined components of the Serial Detector
\cite{bonczek2021detection} --- both regimes and all budgets, attacked and nominal false-alarm rates agree, $\max|\Delta\mathrm{FAR}|=0$ in double precision. The attack when applied through the
live pipeline, with $\gamma$ from the scheduler, $\zc$ to the detector, and $d_{k}$ driving the remote error, raising the remote mean-square error from $1.75\times$ nominal at $\Gbar=0.02$ to $5.75\times$ (N1) and $5.60\times$ (N2) at $\Gbar=0.50$.}

\rev{\paragraph{Covariance Gap}\label{sec:cov-gap} Proposition~\ref{prop:secondmoment} is visible in the same streams:
$\Var(z^{a})/S$ stays within $0.15\%$ of unity while $\kexc(z^{a})$
matches $f^{4}\kexc(z)$ to $1.2\%$, so \eqref{eq:guoconstraint} is exact
and blind to the fourth cumulant. A fourth-moment test then gains power
as $|f|\to0$, as Prop.~\ref{prop:secondmoment}(ii) implies: the maximum-entropy point $f=0$ is easiest to catch (power $0.78$ at $N=200$), so adding noise does not buy stealth and only the flip survives. }



\rev{\paragraph{Damage}
\label{sec:res-damage}
Fig.~\ref{fig:psi} compares $\Psi$ with the Gaussian closed form $\Psi_{\mathcal G}$ of Rem.~\ref{rem:gaussian} and with the plug-in $\hat\Psi_{N}$: the three agree under N1, which is the control, while under N2 $\Psi_{\mathcal G}$ understates $\Psi$ by a factor of $2.43$ at
$\Gbar=0.02$. 

Off Gaussianity the Def.~\ref{def:css} hypothesis of
Thm.~\ref{thm:degradation} no longer holds, so the cross term of
\eqref{eq:Drec} survives and \eqref{eq:Dinf} --- exact under N1 to within
$0.9\%$ --- over-predicts the realized excess mean-square error by
$5$--$31\%$ under N2, worst near $\Gbar=0.05$; the directly measurable
part of that gap is the surviving cross term, at $-29.0\%$ of the excess
at $\Gbar=0.02$ and decaying monotonically to $-0.6\%$ at $\Gbar=0.50$. 
This is the second reading of Rem.~\ref{rem:coupling}: the fourth cumulant that exposes the schedule to $\Gsgn$ is the same one that makes the Lyapunov identity inexact. The net effect on a Gaussian-calibrated adversary is nonetheless a large \emph{under}-estimate where it matters: at $\Gbar=0.02$ it predicts $\tr(D_{\infty}$) = $0.051$ against a realized $0.103$, understating its own damage by a factor of two.}

\begin{figure}[t]\centering
\subfloat[]{\includegraphics[width=0.49\columnwidth]{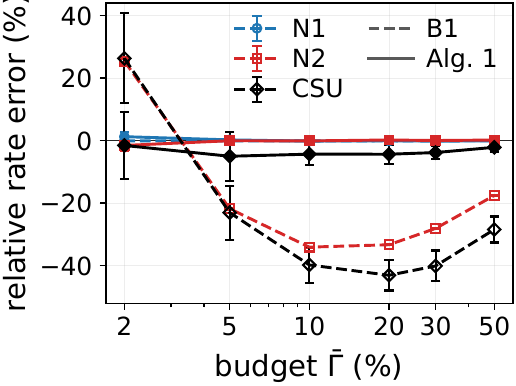}\label{fig:budget}}\hfil
\subfloat[]{\includegraphics[width=0.49\columnwidth]{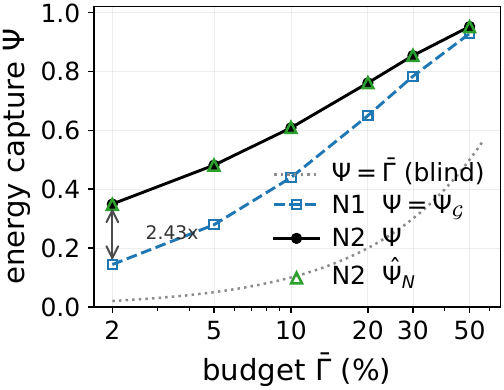}\label{fig:psi}}
\caption{(a) Signed relative error of the realized firing rate against the budget, $95\%$ confidence intervals; half-widths do not exceed
$0.10$ (N1), $0.11$ (N2) and $2.07$ (CSU) pp. (b) Energy capture $\Psi$ against the budget.}
\label{fig:main}
\end{figure}

\rev{\paragraph{Boundary}\label{sec:res-boundary}
A lag-one whiteness test on the optimal rule \eqref{eq:optrule}
separates the two regimes cleanly. Under N1, Def.~\ref{def:css} holds
and Prop.~\ref{prop:boundary} applies: the test never departs from its nominal level, with power at or below $0.018$ at every budget and
window, so the flip leaves no serial signature. Under N2 the hypothesis
fails, since \eqref{eq:optrule} fires on the largest $|z_{k}|$, which is
where the sign--magnitude dependence \eqref{eq:fourth} lives. The resulting exposure at $W=5000$ with the whiteness power peaks at $1.000$ for $\Gbar=0.02$, falling to $0.113$ at $\Gbar=0.50$ --- vanishing at both extremes exactly as Rem.~\ref{rem:coupling} predicts. The sign-dependent control
$\gamma_{k}=\mathbf1\{z_{k}>\varepsilon\}$ is invisible to the whiteness
test under N1, the innovation there being i.i.d., but a sign-balance
test attains power $1.000$ against it in both regimes while never
exceeding $0.018$ against \eqref{eq:optrule}; under N2 it is caught by
the whiteness test as well, at power $\ge0.999$ at every budget.} 
\rev{\section{Conclusion}
\label{sec:conclusion}

We have studied budgeted stealthy FDI scheduling against remote
estimation without the Gaussian innovation assumption underlying
existing event-triggered schedulers. Both parametric ingredients are
replaced by distribution-free objects: the tail inverse by a
split-conformal order statistic of the nominal magnitude record, and the
covariance-matching certificate by a pathwise magnitude identity. A
Gaussian model of a non-Gaussian residual therefore misleads in both
directions, missing the resource constraint by an irreducible bias and
understating the damage available to a low-budget adversary --- both
worst exactly where an energy-limited attacker operates.

Its price is that the sign-flip is no longer stealthy against
\emph{every} residual-based detector. Lifting that limitation --- to
detectors with memory and to vector measurements --- is left to future work. On the defender's side the boundary result is constructive: a monitor reading received signs, not magnitudes alone, breaks the certificate under non-Gaussian noise.}

%% file: bibliography_1.bib
@article{zhou2024cybersecurity,
  title={Cybersecurity landscape on remote state estimation: A comprehensive review},
  author={Zhou, Jing and Shang, Jun and Chen, Tongwen},
  journal={IEEE/CAA Journal of Automatica Sinica},
  volume={11},
  number={4},
  pages={851--865},
  year={2024},
  publisher={IEEE}
}

@article{han2015stochastic,
  author  = {Han, Duo and Mo, Yilin and Wu, Junfeng and Weerakkody, Sean and Sinopoli, Bruno and Shi, Ling},
  title   = {Stochastic Event-Triggered Sensor Schedule for Remote State Estimation},
  journal = {IEEE Transactions on Automatic Control},
  volume  = {60},
  number  = {10},
  pages   = {2661--2675},
  year    = {2015},
  
}

@article{zhang2015optimal,
  title={Optimal denial-of-service attack scheduling with energy constraint},
  author={Zhang, Heng and Cheng, Peng and Shi, Ling and Chen, Jiming},
  journal={IEEE Transactions on Automatic Control},
  volume={60},
  number={11},
  pages={3023--3028},
  year={2015},
  publisher={IEEE}
}

@book{vd1998asymptotic,
  title     = {Asymptotic Statistics},
  author    = {van der Vaart, A. W.},
  year      = {1998},
  publisher = {Cambridge University Press},
  series    = {Cambridge Series in Statistical and Probabilistic Mathematics}
}

@book{ken1999levy,
  title     = {L{\'e}vy Processes and Infinitely Divisible Distributions},
  author    = {Sato, Ken-Iti},
  volume    = {68},
  year      = {1999},
  publisher = {Cambridge University Press}
}

@article{lei2018distribution,
  title={Distribution-free predictive inference for regression},
  author={Lei, Jing and G’Sell, Max and Rinaldo, Alessandro and Tibshirani, Ryan J and Wasserman, Larry},
  journal={Journal of the American Statistical Association},
  volume={113},
  number={523},
  pages={1094--1111},
  year={2018},
  publisher={Taylor \& Francis}
}

@inproceedings{bonczek2021detection,
  title={Detection of hidden attacks on cyber-physical systems from serial magnitude and sign randomness inconsistencies},
  author={Bonczek, Paul J and Bezzo, Nicola},
  booktitle={2021 American Control Conference (ACC)},
  pages={3281--3287},
  year={2021},
  organization={IEEE}
}

@book{anderson2005optimal,
  title={Optimal filtering},
  author={Anderson, Brian DO and Moore, John B},
  year={2005},
  publisher={Courier Corporation}
}

@article{tucker1959generalization,
  title={A generalization of the Glivenko-Cantelli theorem},
  author={Tucker, Howard G},
  journal={The Annals of Mathematical Statistics},
  volume={30},
  number={3},
  pages={828--830},
  year={1959},
  publisher={JSTOR}
}

@article{Wang2025,
  title={Stealthy false data injection attack scheduling design for multi-sensor systems with resource constraints},
  author={Wang, Ting-Ting and Yang, Guang-Hong and Dimirovski, Georgi Marko},
  journal={Journal of the Franklin Institute},
  volume={362},
  number={1},
  pages={107445},
  year={2025},
  publisher={Elsevier}
}

@article{Guo2023,
  title={Event-based optimal stealthy false data-injection attacks against remote state estimation systems},
  author={Guo, Haibin and Sun, Jian and Pang, Zhong-Hua and Liu, Guo-Ping},
  journal={IEEE Transactions on Cybernetics},
  volume={53},
  number={10},
  pages={6714--6724},
  year={2023},
  publisher={IEEE}
}

@inproceedings{biggs2024modeling,
  title={Modeling a heavy-duty vehicle data collection process},
  author={Biggs, Tyler and Lanigan, Trevor and Ruddell, David and Gallegos, Erika E and Daily, Jeremy},
  booktitle={2024 19th Annual System of Systems Engineering Conference (SoSE)},
  pages={256--263},
  year={2024},
  organization={IEEE}
}

@article{zhao2021event,
  title={Event-triggered distributed fusion for multirate multisensor systems with heavy-tailed noises},
  author={Zhao, Ling and Cao, Xinyue and Li, Li and Yang, Hongjiu},
  journal={IEEE Transactions on Systems, Man, and Cybernetics: Systems},
  volume={52},
  number={5},
  pages={3137--3150},
  year={2021},
  publisher={IEEE}
}
